\documentclass[pdflatex,sn-mathphys-num]{sn-jnl}% Math and Physical Sciences Numbered Reference Style
\usepackage{graphicx}%
\usepackage{multirow}%
\usepackage{amsmath,amssymb,amsfonts}%
\usepackage{amsthm}%
\usepackage{mathrsfs}%
\usepackage[title]{appendix}%
\usepackage{xcolor}%
\usepackage{textcomp}%
\usepackage{manyfoot}%
\usepackage{booktabs}%
\usepackage{algorithm}%
\usepackage{algorithmicx}%
\usepackage{algpseudocode}%
\usepackage{listings}%
\def\a{\alpha}
\def\b{\beta}
\def\g{\gamma}

\def\={\overset{\bm .}{=}}
\def\e{\epsilon}

\def\sff{\mathrm{I\!I}}

\def\p{\partial}

\usepackage{ulem}

\usepackage[scr=rsfs]{mathalpha}

\theoremstyle{thmstyleone}%
\newtheorem{theorem}{Theorem}%  meant for continuous numbers
\newtheorem{proposition}[theorem]{Proposition}% 

\theoremstyle{thmstyletwo}%

\theoremstyle{thmstylethree}%
\newtheorem{definition}{Definition}%

\newtheoremstyle{manual}{}{}{\itshape}{}{\bfseries}{.}{ }{#1 #3}
\theoremstyle{manual}
\newtheorem*{manualthm}{Theorem}

\begin{document}

\title[Article Title]{Convex foliations and trapped submanifolds}

%%=============================================================%%
%% GivenName	-> \fnm{Joergen W.}
%% Particle	-> \spfx{van der} -> surname prefix
%% FamilyName	-> \sur{Ploeg}
%% Suffix	-> \sfx{IV}
%% \author*[1,2]{\fnm{Joergen W.} \spfx{van der} \sur{Ploeg} 
%%  \sfx{IV}}\email{iauthor@gmail.com}
%%=============================================================%%

\author[1,2]{\fnm{Gustavo} \sur{Dotti}}\email{gustavodotti@unc.edu.ar}

\affil[1]{\orgdiv{FAMAF}, \orgname{Universidad Nacional de C\'ordoba}, \orgaddress{\street{Ciudad Universitaria}, \city{C\'ordoba}, \postcode{(5000)}, \state{Cba}, \country{Argentina}}}

\affil[2]{\orgdiv{IFEG}, \orgname{CONICET}, 
\orgaddress{\street{Ciudad Universitaria}, \city{C\'ordoba}, \postcode{(5000)}, \state{Cba}, \country{Argentina}}}

%%==================================%%
%% Sample for unstructured abstract %%
%%==================================%%

\abstract{The conjecture that compact trapped submanifolds (CTMs) of any codimension greater than one 
 cannot intersect the domain 
of outer communications of a black hole is tested in symmetrically collapsing spacetimes of $n+1$ dimensions, 
$n \geq 2$, and on the entire Kerr-Newman sub-extreme family. The results provide  evidence 
to the idea that  CTMs of lower dimension, such as trapped loops,  
should be ragarded as black hole signatures.}

\keywords{Gravity in arbitrary dimensions, Black holes, Compact trapped submanifolds}

%%\pacs[JEL Classification]{D8, H51}

%%\pacs[MSC Classification]{35A01, 65L10, 65L12, 65L20, 65L70}

\maketitle

\tableofcontents

\section{Introduction}\label{sec1}
Let $(M,g)$ be an $n+1$ dimensional spacetime (Lorentzian, time oriented manifold), $S$ a submanifold.  Our 
conventions follows those  in \cite{Dotti2}:
the symbols $\perp$ and $\top$ denote respectively 
the normal and tangent component of  vectors  defined on $S$, $TS$ and $(TS)^\perp$ the tangent and normal bundles, 
$\mathfrak{X}(S)$ the set of (tangent) vector fields on $S$ and $\mathfrak{X}(S)^\perp$ the set of normal vector fields. 
The second fundamental form of $S \subset M$ is the  $(TS)^\perp$ valued 
symmetric $(0,2)$ tensor field on $S$ defined, for  $X, Y \in \mathfrak{X}(S)$, as
\begin{equation}\label{sffa}
\sff(X,Y)=-(\nabla_XY)^\perp.
\end{equation}
That this is indeed tensorial, that is, depends only on the values of $X$ and $Y$ at the point, and symmetric, 
is proved, e.g., in  \cite{onil}). 
The mean curvature vector field (MCVF) is the trace of $\sff$ over $TS$:
\begin{equation}
H ={\rm tr}|_S \sff.
\end{equation}
In components, in a -hopefully- self-explanatory notation:
 \begin{equation}\label{sffb}
 \begin{split}
\sff_{\a \b}^b &=   -(e_\a^a \nabla_a e_\b^b)^\perp.\\
H^b&=   -h^{\a \b} (e_\a^a \nabla_a e_\b^b)^\perp,
\end{split}
\end{equation}
where $\{ e^a_\a, \a=1,2,...,\text{dim} \, S\}$ is a local basis of vector fields on $S$ and $h^{\a \b}$ is the inverse of the induced metric $h_{\a \b}= g_{ab} 
e^a_\a e^b_\b$.  \\

Given any vector field  $\zeta \in \mathfrak{X}(M)$, define $S_t^\zeta$ as the flow of $S$ along $\zeta$, $V^\zeta(t)$ the volume of $S_t^\zeta$. 
If $S$ is compact one can show that \cite{onil,jost,li}
%$\p_t (V(S_t))|_{t=0} =: \dot V_\zeta$, which is 
\begin{equation}\label{Vdot}
\left. \frac{d V^\zeta(t)}{dt} \right|_{t=0} =\int_S H^b \zeta_b \, \e_o,
\end{equation}
$e_o$ the volume form of $S=S^\zeta_{t=0}$.  A  Compact Trapped subManifold (CTM) is  
a compact submanifold with future timelike MCVF. For these, the right hand side of \eqref{Vdot} is negative 
for future causal vector fields, so
CTMs can  equally be characterized as those compact submanifolds that initially shrink when flowed along 
future causal vector fields. 
A codimension two CTM is a { Compact Trapped Surface} (CTS), a one dimensional CTM is a 
Compact Trapped Loop (CTL). Compact trapped surfaces (in $3+1$ dimensions) 
were introduced by Roger Penrose in his fundamental paper \cite{Penrose}, 
where it is proved that, 
in a spacetime with a non-compact Cauchy surface and satisfying the null energy condition, they predict 
future null geodesic incompleteness. This result was extended by Galloway and Senovilla \cite{Gallo} 
to spacetimes of arbitrary dimensions and CTMs of higher codimensions:

\begin{theorem}[Galloway and Senovilla \cite{Gallo}]\label{thm1}
A spacetime $M^{n+1}$ is future null geodesically incomplete if
 it has a non compact Cauchy surface and contains a CTM $S$ of codim $\geq 2$ such that,  
along every orthogonal future null geodesic, 
\begin{equation}\label{cec}
R_{abcd} N^a N^c P^{bd} \geq 0
\end{equation}
holds from $S$ on, where $N^a$ is the tangent to the geodesic  and 
$P^a{}_b$ the parallel transport along the geodesic of the projector onto $T_pS$.
\end{theorem}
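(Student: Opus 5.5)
The plan is to argue by contradiction, following Penrose's original argument with Galloway and Senovilla's focusing step for higher codimension. Assume $M$ is future null geodesically complete. The target is to show that the achronal boundary $\partial J^+(S)$ is compact. Combined with the non-compact Cauchy surface $\Sigma$, this yields a contradiction by the standard topological argument.

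\textbf{Step 1 (focusing).} Take any future null geodesic $\gamma(s)$ orthogonal to $S$, with $\gamma(0)=p\in S$ and tangent $N^a$. Let $k=\dim S$. Parallel transport an orthonormal basis $\{E_\alpha\}$ of $T_pS$ along $\gamma$, and consider the $S$-Jacobi fields $J_\alpha$ with initial data
\begin{equation*}
J_\alpha(0)=E_\alpha,\qquad J_\alpha'(0)=\nabla_{E_\alpha}N \ \text{(tangential part fixed by } \sff\text{)}.
\end{equation*}
Define the expansion $\theta=\sum_\alpha \langle J_\alpha', J_\alpha\rangle$ along the $k$-dimensional transported frame. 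At $s=0$ one finds $\theta(0)=-\langle H,N\rangle$, with a sign chosen via \eqref{sffa}. Because $H$ is future timelike and $N$ is future null, $\theta(0)<0$ for every such $N$.

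Next I compute a Raychaudhuri-type equation for the index form restricted to the span of the parallel fields. Take the trial variations $V_\alpha=f(s)E_\alpha$ with $f(0)=1$ and $f(b)=0$. The second variation of arc length (the null index form) is then
\begin{equation*}
I=\sum_\alpha I(V_\alpha,V_\alpha)=\int_0^b\big(k f'^2 - f^2 R_{abcd}N^aN^cP^{bd}\big)ds+\langle H,N\rangle.
\end{equation*}
With $f$ linear this is at most $k/b+\langle H,N\rangle$, using \eqref{cec}. Take $b\le b_0:=k/|\langle H,N\rangle|$, and note that $|\langle H,N\rangle|$ is bounded below by compactness of $S$ after normalizing $N$ suitably. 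Then $I<0$, so some variation gives a focal point of $S$ along $\gamma$ within parameter $b_0$. Completeness guarantees that $\gamma$ is defined up to $b_0$. Hence $\gamma$ leaves $\partial J^+(S)$ before $b_0$, since null geodesics past a focal point enter $I^+(S)$.

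\textbf{Step 2 (compactness).} Normalize the null normals by a continuous choice $N$, e.g.\ $\langle N,T\rangle=-1$ for a timelike field $T$, so that they form a compact set $K$ (two families of null normals in codimension 2, a sphere bundle in general). Since $\partial J^+(S)$ is generated by orthogonal null geodesics from $S$ without focal points, it lies in the image of the compact set $\{s\,N: N\in K,\ 0\le s\le b_0\}$ under the normal exponential map. It is therefore compact. It is also a closed achronal topological hypersurface, and nonempty unless $S$ bounds nothing.

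\textbf{Step 3 (topology).} Flow along a timelike vector field to get a continuous injective map $\partial J^+(S)\to\Sigma$. Its image is open by invariance of domain and compact, hence closed. So it is all of $\Sigma$, contradicting non-compactness.

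The main obstacle is Step 1 in codimension greater than two. There the normal bundle has a whole sphere of null directions, and $\theta$ only involves the $k$ tangential directions. One must check that variations built from parallel transport of $T_pS$ suffice, that the curvature term is exactly $R_{abcd}N^aN^cP^{bd}$, and that the focal-point criterion (a negative index form implies a focal point before $b$) holds for submanifolds of any codimension. I would handle this with the standard Morse index theory for null geodesics normal to a submanifold, as in O'Neill, working in the quotient bundle $N^\perp/N$.
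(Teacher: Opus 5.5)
Your argument is the standard one, and it is what the paper relies on. Theorem~\ref{thm1} is not reproved in the paper but cited from Galloway--Senovilla. Their focal-point proposition, which the paper invokes right after the statement, is your Step~1, and their conclusion is Penrose's compactness-plus-topology argument, your Steps~2--3. Steps~2 and~3 are fine, but Step~1 has two slips that need correcting.

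\textbf{The sign of $\theta(0)$.} With the paper's convention $\sff(X,Y)=-(\nabla_XY)^\perp$, the null expansion is
$\theta(0)=\sum_\alpha\langle\nabla_{E_\alpha}N,E_\alpha\rangle=-\sum_\alpha\langle N,\nabla_{E_\alpha}E_\alpha\rangle=+\langle H,N\rangle$.
This is negative because $H$ is future timelike and $N$ is future null; your $-\langle H,N\rangle$ would be positive. Your boundary term $+\langle H,N\rangle$ in $I$ is the correct one. Since $\theta$ and the fields $J_\alpha$ are never used afterwards, you can simply drop them.

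\textbf{The inequality on $b$.} It is backwards: $k/b+\langle H,N\rangle<0$ holds exactly when $b>k/|\langle H,N\rangle|$, not when $b\le b_0$. Set $b_0:=k/\min_K|\langle H,N\rangle|$, which is uniform over the compact set $K$ of normalized null normals. For every $b>b_0$, the linear $f$ gives $I<0$, hence a focal point of $S$ on $(0,b]$. Since $b>b_0$ is arbitrary, the first focal point occurs at affine parameter at most $b_0$.

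\textbf{The obstacle you flag.} The corrected Step~1 already handles it. You never need to control the extra $(\mathrm{codim}\,S-2)$ Jacobi directions obtained by rotating $N$ inside the normal space. If there are no focal points on $(0,b]$, the index form is positive definite on the whole quotient-bundle space. So a negative value on the span of the $fE_\alpha$ already forces a focal point.

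\textbf{Nonemptiness in Step~2.} Drop the caveat ``unless $S$ bounds nothing''. The set $\partial J^+(S)$ is automatically nonempty. A compact $S$ cannot lie in $I^+(S)$: a finite subcover by sets $I^+(q_i)$ with $q_i\in S$, plus pigeonhole, would produce a closed timelike curve. Hence $S\cap\partial J^+(S)\neq\emptyset$. This nonemptiness is genuinely needed in Step~3 to conclude that the open and closed image is all of the connected Cauchy surface.
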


As shown in Proposition 1 in \cite{Gallo}, condition \eqref{cec} implies that long enough normal 
null geodesics reach an $S-$conjugate point, a condition used in the proof of Theroem \ref{thm1}. 
When $S$ is a surface (that is, codim $S=2$), the null energy condition implies  
\eqref{cec} and, together with  the above theorem, gives  the standard formulation Penrose's theorem. 
To prove this assertion, let $E_\a^a$ be the parallel 
transport of the $S$ tangent vectors $e^a_\a$ defined above, 
so that $P^{ab}=h^{\a \b}E_\a^a E_\b^b$. Complete 
 a frame along the geodesic by adding  $L^a$ future null, orthogonal 
to the $E_\a^a$ (as is $N^a$) and normalized as $L^aN_a=-1$, so that 
$g^{ab}=P^{ab}-L^a N^b -N^a L^b$, then 
\begin{equation}\label{nec}
R_{abcd} N^a N^c P^{bd} = R_{abcd} N^a N^c (P^{bd}+L^d N^b +N^d L^b)= R_{ac} N^a N^c,
\end{equation}
which is nonnegative if the NEC holds. Of course, the NEC is much more than what we need: 
Theorem \ref{thm1} 
only requires  \eqref{nec} to hold along the future half  of the  null geodesics orthogonal 
to the CS, and only for $N^a$ the geodesic tangent.  
Making this distinction is  crucial 
for higher codimension CTMs, because in this case 
   \eqref{cec} not only involves the Ricci part but also the Weyl piece of the curvature tensor 
   (so that  
  energy conditions alone 
cannot  define whether they are satisfied), 
and   is not expected to hold with such a generality,
 that is, at any spacetime point, for any null vector $N^a$ and orthogonal projector 
$P^{ab}$. 
This is proved in \cite{dd},  where it is also shown 
that, despite this,  
 Theorem \ref{thm1}  still finds  applications to predict singularities, thanks to the existence 
 of $k-$CTMs which are ``properly oriented'', that is,  so that \eqref{cec} holds. On the other hand, 
the standard proof  (as given,  e.g.  in \cite[Proposition~9.2.1]{hawk} or \cite[Proposition~ 2.2.2]{wald})  that CTSs do not 
intersect the domain of outer communications (DOC) 
of an asymptotically simple spacetime, cannot be generalized to $k-$CTMs of higher codimension, since the proof 
 is based
on the existence of conjugate points and would require 
that \eqref{cec} holds for \textit{every} $k-$CTM intersecting the DOC, not only for ``properly oriented'' ones. 
This is discussed in detail in \cite{dd}. \\
In \cite{Dotti2} and obstruction theorem for $k-$CTMs is given which can be used 
in two different ways: i) to rule out the existence of $k-$CTMs in spacetime regions foliated by 
$k-$future convex spacelike or null hypersurfaces and ii) to prove that no $k-$CTM can enter  a region 
foliated this way, and bounded by a leaf, 
from its future side.

This paper is part of a series aiming to emphasize the relevance of higher-codimension ($n+1-k > 2$)
 $k-$CTMs in spacetimes $(M^{n+1},g)$ with $n \geq 3$. The fact that a properly oriented $k-$CTM predicts the incompleteness of the future half 
of a
 null geodesic normal to it implies, by invoking weak cosmic censorship, that these CTMs should be regarded as black hole signatures. A natural question is then whether one can rule out the possibility that CTMs intersect the DOC, as happens with CTSs. In view of the comments above, the proof that works for CTSs does not generalize to higher-codimension CTMs, as they will not generically satisfy \eqref{cec}, which invalidates the proof of the existence of conjugate points. If the 
 conjecture that higher-codimension CTMs do not intersect the DOC  in general asymptotically simple spacetimes is true, 
 an alternative proof is needed. In this paper, this property is demonstrated for 
 the entire $(a,Q,M)$ Kerr-Newman family and in spherical collapse using the results in \cite{Dotti2}. 
 Section \ref{rev} offers a concise review of the results required from \cite{Dotti2}.

\section{Obstructions for CTMs}\label{rev}
\subsection{A Hierarchy of Convexity Conditions}

The following definition was introduced in \cite{Dotti2}:
\begin{definition} \label{dfn2} A  
spacelike or null hypersurface 
$\Sigma^n$ of a spacetime  $M^{n+1}$ 
is \emph{k-future convex} if, for any $p \in \Sigma$ and 
  any $k-$dimensional \textit{spacelike} subspace $W$ 
 of $V=T_p\Sigma^n$,
\begin{equation}\label{C1}
h^{\a \b}\,  e^a_\a e^b_\b \nabla_a N_b  \geq 0.
\end{equation}
Here $N^b$ is a vector field normal to $\Sigma$ and future pointing, 
$e^a_\a, \a=1,2,...,k$ a basis of $W$ and $h^{\a \b}$ 
the inverse metric matrix in this basis, so that 
the left side of \eqref{C1} is the $W-$trace 
of $(\nabla_a N_b)|_{W \otimes W}$.
\end{definition}

To get some intuition on this condition it is best to analyze separately 
the cases where $\Sigma^n$ is null or spacelike. \\

 \noindent
\uline{Spacelike hypersurfaces}: If $\Sigma^n$  is spacelike, $V=T_p \Sigma^n$ is an $n-$dimensional vector 
space with a positive definite induced metric $h_{ij}$, $i,j=1,2,...,n$, and  $k$ introduced  in Definition~\ref{dfn2} 
can take the  values $k=1, 2, 3, ..., n$. The 
set of $k-$dimensional subspaces of $V$ is the Grassmannian $\mathrm{Gr}(k,V)$, 
which is a compact manifold of dimension $k(n-k)$. Condition 
\eqref{C1} can be phrased as the non negativity of the function 
\begin{equation} \label{trs}
\mathrm{Gr}(k,V) \ni W \to {\rm tr}_W(K) \in \mathbb{R},
\end{equation} 
where $K_{ij}$ is the symmetric $(0,2)$ tensor on $V$ defined by  $(\nabla_a N_b)|_{V \otimes V}$. 
Since $K_{ij}$ is symmetric and $h_{ij}$ is positive definite, the $(1,1)$ tensor $h^{-1}K$ (that is,
 $h^{ik}K_{kj}$)  on $V$ 
admits a basis of eigenvectors with eigenvalues 
\begin{equation}\label{ev2}
\lambda_1  \leq \lambda_2 \leq  \cdots \leq \lambda_n.
\end{equation}
Now if $W$ is a $k<n$ dimensional subspace of $V$ with induced metric $h_W$, then, since the restriction 
$K|_W$ is symmetric and $h|_W$  positive definite, $(h_W)^{-1} (K|_W)$ is also diagonalizable. Its 
eingenvalues, however are   \textit{not} in general a subset of \eqref{ev2},  
unless $W$ is an invariant subspace 
of $h^{-1}K: V \to V$. It is shown in \cite{Dotti2} that the invariant subspaces of dimension $k$ are precisely 
the stationary points of the function \eqref{trs}, from where it follows that 
the global minimum of \eqref{trs} is $\sum_{A=1}^k \lambda_A$. This proves the following
\begin{proposition}\label{slc}
Assume the hypersurface $\Sigma^n$ is spacelike at $p$ with induced metric $h$ and let   $\lambda_1 \leq \lambda_2 ... \leq \lambda_n$ be the 
eigenvalues of $h^{-1}(\nabla N|_{T_p \Sigma \otimes T_p \Sigma}):
 T_p \Sigma \to T_p \Sigma$. For $k=1,2,...,n$,  $\Sigma$ is $k-$future convex at $p$ iff 
$\sum_{A=1}^k \lambda_A \geq 0$.
\end{proposition}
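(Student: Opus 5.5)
The plan is to reduce the statement to the Ky Fan minimum principle for a symmetric form relative to a positive definite inner product. Fix $p$ and put $V=T_p\Sigma$. Since $\Sigma$ is spacelike at $p$, every subspace $W\subset V$ is spacelike, so $k$-future convexity at $p$ is exactly the condition that the function \eqref{trs}, $W\mapsto {\rm tr}_W(K)$, is nonnegative on all of $\mathrm{Gr}(k,V)$. Here $K=\nabla N|_{V\otimes V}$ is symmetric: $N$ is hypersurface orthogonal, so on tangent vectors $\nabla_aN_b$ coincides, up to sign convention, with the second fundamental form of $\Sigma$. The statement is therefore equivalent to
\[
\min_{W\in \mathrm{Gr}(k,V)} {\rm tr}_W(K)=\sum_{A=1}^k\lambda_A .
\]

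First, I would pick an $h$-orthonormal basis $\{v_i\}$ of $V$ made of eigenvectors of $h^{-1}K$, so that $K(v_i,v_j)=\lambda_i\delta_{ij}$. Such a basis exists because $K$ is symmetric and $h$ is positive definite. Taking $W_0={\rm span}\{v_1,\dots,v_k\}$ gives ${\rm tr}_{W_0}K=\sum_{A\le k}\lambda_A$, so the minimum is at most this sum. This settles the direction ``$k$-future convex $\Rightarrow \sum_{A\le k}\lambda_A\ge 0$''.

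Second, for the lower bound I take any $W$ and an $h$-orthonormal basis $\{w_\alpha\}_{\alpha=1}^k$ of $W$. Writing $w_\alpha=\sum_i c_{\alpha i}v_i$, I get ${\rm tr}_WK=\sum_i \lambda_i\mu_i$ with weights $\mu_i=\sum_\alpha c_{\alpha i}^2$. These weights satisfy $0\le\mu_i\le 1$, because $\mu_i=|P_Wv_i|^2$ where $P_W$ is the orthogonal projector onto $W$, and $\sum_i\mu_i=k$. A linear function with increasing coefficients $\lambda_1\le\dots\le\lambda_n$, minimized over this polytope, is minimized by putting weight $1$ on the $k$ smallest $\lambda_i$. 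Explicitly,
\[
\sum_i\lambda_i\mu_i-\sum_{A\le k}\lambda_A=\sum_{i\le k}(\mu_i-1)(\lambda_i-\lambda_k)+\sum_{i>k}\mu_i(\lambda_i-\lambda_k)\ge0,
\]
where I used $\sum_i(\mu_i-1_{i\le k})=0$ to subtract $\lambda_k$ from each coefficient. Both sums are nonnegative term by term. This gives the reverse implication.

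Alternatively, one could follow the route indicated in the text: $\mathrm{Gr}(k,V)$ is compact, so the minimum is attained at a stationary point, and the stationary points are the invariant subspaces of $h^{-1}K$ (assumed from \cite{Dotti2}). On an invariant subspace the trace is a sum of $k$ eigenvalues, and the smallest such sum is $\sum_{A\le k}\lambda_A$. I prefer the direct weight argument because it avoids degenerate-eigenvalue subtleties.

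The only step that needs care is the bound $\mu_i\le1$, together with the rearrangement inequality written out above. Both are elementary, so I do not expect a real obstacle. The case $k=n$ is immediate, since then $W=V$ and the trace is $\sum_i\lambda_i$.
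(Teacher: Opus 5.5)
Your argument is correct, but it takes a different route from the paper's.

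The paper argues variationally. It regards $W\mapsto {\rm tr}_W(K)$ as a function on the compact manifold $\mathrm{Gr}(k,V)$ and imports from \cite{Dotti2} the fact that its stationary points are exactly the $k$-dimensional invariant subspaces of $h^{-1}K$. The global minimum is attained at one of these, so it equals the smallest possible sum of $k$ eigenvalues, $\sum_{A\le k}\lambda_A$.

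You instead prove the Ky Fan minimum principle directly. The span of the first $k$ eigenvectors gives the upper bound. For the lower bound you write ${\rm tr}_W K=\sum_i\lambda_i\mu_i$ with $\mu_i=|P_Wv_i|^2\in[0,1]$ and $\sum_i\mu_i=k$, and your rearranged identity then gives the bound term by term. I checked the identity: the two sums are indeed nonnegative, since $(\mu_i-1)(\lambda_i-\lambda_k)\ge 0$ for $i\le k$ and $\mu_i(\lambda_i-\lambda_k)\ge 0$ for $i>k$.

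Your route buys self-containment. It needs no calculus on the Grassmannian, no compactness, and no critical-point lemma from \cite{Dotti2}. Degenerate eigenvalues need no separate treatment. It also yields an explicit nonnegative expression for ${\rm tr}_WK-\sum_{A\le k}\lambda_A$, from which the minimizers can be read off: they are $W=E_{<\lambda_k}\oplus W'$ with $W'\subset E_{\lambda_k}$.

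The paper's route gives more structure: the full critical set of the trace function, which matches its picture of $k$-future convexity in terms of principal directions and principal curvatures.

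Both arguments carry over unchanged to the null case of Proposition~\ref{nc}, because ${\rm tr}_W(K)={\rm tr}_{\pi W}(K)$ reduces that case to the positive definite section $V_o$.
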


Utilizing this result, a clear connection can be made  between the  $k-$future convexity introduced in Definition \ref{dfn2} and
standard  concepts in extrinsic  Riemannian geometry (extrapolated 
to a Lorentzian background) and also established some useful properties of this condition:
\begin{enumerate}
\item $h^{-1}K: V \to V$ is the \textit{shape operator} at $p$, 
 its eigenvalues \eqref{ev2} are the \textit{principal curvatures} of $\Sigma^n$ at this point. 
\item \label{thisone} $k-$future convexity implies $k'-$future convexity for $k'>k$ \cite{Dotti2}.
\item  $1-$future convexity (the most stringent condition in this family) agrees with the concept of  local 
convexity, as defined e.g., in \cite{lima,bishop,bon},
 with an additional time orientation requirement, since $N^a$  future 
is assumed in \eqref{C1}, whereas $\pm N^a$ do the same in the standard settings. 
This explains the qualifier \textit{future} in ``$k-$future convexity''. 
\item $n-$future convexity  (the weakest one) is equivalent to the MCVF of 
$\Sigma$  being past pointing (see \cite{Dotti2}), that is,
$\Sigma^n$ being past trapped.
\item For $1<k<n$, the $k-$future convexity condition interpolates gradually 
between the previous two, as follows from 
item \ref{thisone} above, and does not seem to have an antecedent in the literature. 
\end{enumerate}

\noindent
\uline{Null hypersurfaces}: When $N^a$ is null we define 
$V$, $h_{ij}$ and $K_{ij}$ as in the spacelike case, but now 
$h_{ij}$ is degenerate, then $h^{-1}$ does not exist.  Also, the allowed $k$ values in Definition \ref{dfn2} 
are $k=1,2,...,n-2$ and the $k-$dimensional admissible  subspaces lie in the open 
subset $\widetilde{\mathrm{Gr}}(k,V) \subset \mathrm{Gr}(k,V)$ of  $k-$dimensional subspaces of $V$ 
\textit{with a positive definite induced metric}. 
Following \cite{Dotti2}, choose  a 
subspace of $V_o \subset V$ such that 
\begin{equation}\label{oplus}
V = V_o \oplus \text{span}\{ N \},
\end{equation}
(any $n-1$ dimensional subspace of $V$ not containing $N$ will do). 
We refer to  $V_o$ as a a \textit{section} of $V$ and define 
 $\pi: V \to V_o$ as the projector associated to the direct sum \eqref{oplus}. 
 It is easily checked (see \cite{Dotti2}) that, for any two vectors $u, v \in V$, 
 \begin{equation}
 \begin{split}
 h(u,v) &= h(\pi u, \pi v), \\
 K(u,v)&=K(\pi u, \pi v). 
 \end{split}
 \end{equation}
 This reduces the problem of finding the global minimum of 
\begin{equation} \label{trs2}
\widetilde{\mathrm{Gr}}(k,V) \ni W \to {\rm tr}_W(K) \in \mathbb{R},
\end{equation} 
to that of 
\begin{equation} \label{trs3}
\mathrm{Gr}(k,V_o) \ni U \to {\rm tr}_U(K) \in \mathbb{R},
\end{equation} 
 as 
${\rm tr}_W(K) = {\rm tr}_{\pi W}(K)$.  
This allows one to proceed as in the spacelike case:
\begin{proposition}\label{nc}
Assume the hypersurface $\Sigma^n$ is null at $p$.  Let $V_o$ be any section ($n-1$ dimensional spacelike subspace) of  $T_p \Sigma$, 
$h_o$ its induced metric 
and $\lambda_1 \leq \lambda_2 ... \leq \lambda_{n-1}$  the 
eigenvalues of $h_o^{-1} (\nabla N|_{V_o \otimes V_o}): V_o \to V_o$.
 For $k=1,2,...,n-1$,  $\Sigma^n$ is $k-$future convex at $p$ iff 
$\sum_{A=1}^k \lambda_A \geq 0$.
\end{proposition}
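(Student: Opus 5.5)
The strategy is to reduce the null case to the Riemannian situation on the section $V_o$ and then run the spacelike argument there, as announced just before the statement. First I record why $V_o$ is spacelike. Since $N$ is null and normal to $\Sigma$, it lies in $V=T_p\Sigma$ and is $h$-orthogonal to all of $V$, so $N$ spans the radical of $h$. Because $\Sigma$ is null, $h$ is positive semidefinite on $V$ with exactly this one-dimensional radical. Hence on any complement $V_o$ with $V=V_o\oplus\text{span}\{N\}$, the restriction $h_o=h|_{V_o}$ is positive definite. The tensor $K=\nabla N|_{V\otimes V}$ is symmetric on $V$. Indeed, for $X,Y$ tangent to $\Sigma$, $X^aY^b(\nabla_aN_b-\nabla_bN_a)$ is proportional to $N\cdot[X,Y]=0$ when $N$ is normal to a hypersurface (after rescaling $N=f\,d\phi$ the antisymmetric part is $df\wedge d\phi$, which vanishes on $V$). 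The identities $h(u,v)=h(\pi u,\pi v)$ and $K(u,v)=K(\pi u,\pi v)$ hold because $h(N,\cdot)=0$ and $K(N,\cdot)=0$ on $V$. The second follows from $N^aX^b\nabla_bN_a=\tfrac12X(N\cdot N)=0$ together with the symmetry of $K$.

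Next I show that the minimisation over $\widetilde{\mathrm{Gr}}(k,V)$ in \eqref{trs2} is equivalent to minimising over $\mathrm{Gr}(k,V_o)$ in \eqref{trs3}. Let $W\in\widetilde{\mathrm{Gr}}(k,V)$. Then $W\cap\text{span}\{N\}=0$, since $h|_W$ is positive definite and $N$ is null. So $\pi|_W$ is injective and $\pi W\in\mathrm{Gr}(k,V_o)$. If $e_\alpha$ is a basis of $W$, then $\pi e_\alpha$ is a basis of $\pi W$ with the same Gram matrices for $h$ and $K$, so ${\rm tr}_W K={\rm tr}_{\pi W}K$. Conversely, every $U\in\mathrm{Gr}(k,V_o)$ is itself in $\widetilde{\mathrm{Gr}}(k,V)$, because $h_o$ is positive definite. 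The two functions therefore have the same image, and in particular the same infimum. This also explains the allowed range $k\le n-1=\dim V_o$.

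On $(V_o,h_o)$ we are in exactly the setting of the spacelike case. There is a positive definite inner product and a symmetric form $K_o=K|_{V_o\otimes V_o}$, so $h_o^{-1}K_o$ is self-adjoint with real eigenvalues $\lambda_1\le\dots\le\lambda_{n-1}$. The argument behind Proposition~\ref{slc} then applies verbatim: the minimum of $U\mapsto{\rm tr}_U K_o$ over the compact manifold $\mathrm{Gr}(k,V_o)$ is $\sum_{A=1}^k\lambda_A$. This is the Ky Fan principle; I would cite the stationary-point analysis of \cite{Dotti2} or reprove it by the Courant–Fischer interlacing $\mu_A(U)\ge\lambda_A$. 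Thus condition \eqref{C1} holds for all admissible $W$ iff $\sum_{A=1}^k\lambda_A\ge0$.

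The main point needing care is the independence of the section. Different choices of $V_o$, or a rescaling of $N$, must not change the criterion. This already follows from the argument: the sign of the minimum is intrinsic, because it equals the infimum over $\widetilde{\mathrm{Gr}}(k,V)$. Alternatively, the map $\pi$ between two sections is an isometry intertwining the two $K$'s, so the eigenvalues coincide. Rescaling $N$ by a positive function multiplies the eigenvalues by that positive factor, because $K(N,\cdot)=0$ on $V$, so the sign of $\sum_{A=1}^k\lambda_A$ is preserved.
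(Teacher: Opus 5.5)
Your proposal is correct and follows the paper's route: you reduce from $\widetilde{\mathrm{Gr}}(k,V)$ to $\mathrm{Gr}(k,V_o)$ via the projector $\pi$, using $h(u,v)=h(\pi u,\pi v)$, $K(u,v)=K(\pi u,\pi v)$ and ${\rm tr}_W K={\rm tr}_{\pi W}K$, and then apply the spacelike minimum $\sum_{A=1}^k\lambda_A$ (your interlacing argument is just an elementary substitute for the stationary-point analysis of \cite{Dotti2}). Two small remarks: if $\Sigma$ is null only at $p$ (the changing-character case the paper allows), $X(N\cdot N)=0$ at $p$ still holds, because $N\cdot N\le 0$ on $\Sigma$ reaches its maximum $0$ there; and under $N\to\phi N$ the extra term $u^a v^b N_b\nabla_a\phi$ vanishes because $N\perp V$, not because $K(N,\cdot)=0$.
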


Note that the concept of $k-$future convexity adapts naturally to hypersurfaces with alternating null/spacelike 
sectors: $k$ ranges from $1$ to $n-1$ at points where the normal $N^a$ is null and from $1$  to $n$ 
at spacelike points. Propositions \ref{slc} and \ref{nc} are properly worded to be used in the 
case where $\Sigma^n$ changes character.\\

If the spacelike/null 
hypersurface $\Sigma^n$ in the above propositions is  embedded (we will stick to this case from now on), 
it  can  (locally) be regarded as a level set of 
a function $g:~M^{n+1} \to \mathbb{R}$. Choose the sign of $g$ such that 
$\nabla^a g$ is future. We say that a spacelike submanifold $S$ is tangent to $\Sigma^n$ at $p$ 
\textit{from its future side} if $g|_S$ has a local maximum at $p$. To explain the concept,  use 
the local maximum condition and take $O$ an 
open neighborhood of $p$ in $M$ such that $g(p) \geq g(q)$ for all $q \in O \cap S$. 
If $u^c \in T_pS$ then, as $p$ is a critical point of $g|_S$, $u^c \nabla_c g =0$. This 
proves the tangency condition  $T_pS  \subset T_p \Sigma^n$. 
Assume now that  $c(\tau)$   is a timelike curve  within $O$ from $\Sigma^n$ 
to $S$ (here $\tau \in [0,\tau_o]$ is proper time), then its tangent vector $v^a$ must be future. 
To prove this assertion, note that $\tfrac{d}{d\tau}g(c(\tau))= v^a \nabla_a g$. Since $\nabla_a g$ is future causal 
and $v^a$ timelike, $v^a \nabla_a g$ is either positive (then $v^a$ past) in the entire interval $0 \leq \tau \leq 
\tau_o$, or negative 
(then $v^a$ future) in this interval. However the first possibility 
is not allowed because $g(0)=g(p) \geq g(\tau_o)$. Thus the condition that $S$ be tangent to $\Sigma^n$ from 
its future side means there exist an open spacetime neighborhood of the tangency point $p$ 
such that, any timelike curve in $O$ from $\Sigma^n$ to $S$ must be future. Fore more 
details and subtleties 
related to this definition see \cite{Dotti2}.

\subsection{Obstructions for CTMs}

The main result in \cite{Dotti2} admits two equivalent formulations, given below as Theorem \ref{v1} 
and Theorem $4'$. The first version is better suited for calculations  in specific examples, whereas 
the second is more geometric. The main applications explored in \cite{Dotti2} 
were: i) to rule out the existence of CTMs 
of specific dimensions in open spacetime regions (e.g., the non-existence of CTSs within 
extremal black holes in the Kerr-Newman family); and ii) to find spacelike hypersurfaces that 
act as past boundaries of the region where $k-$CTMs are possible. \\
In this paper  the role of the theorem in proving that $k-$CTMs cannot intersect
the DOC is analyzed. To proceed,  the main result in \cite{Dotti2} are ateted in two different ways:

\begin{theorem}[\cite{Dotti2}]  \label{v1}
 Let $(M^{n+1},g_{ab})$ be  a spacetime, $g: M \to \mathbb{R}$ 
 a $C^2$ function and $Z_g^{(k)}$ an open  
 set where $\nabla^a g$ is future causal  and the trace of 
   the restriction of $\nabla_a \nabla_b g$ to spacelike $k-$dimensional subspaces of the tangent space of the $g-$level sets is 
 non-negative. 
 \begin{enumerate}[i)]
\item  If $S \subset M$ is a 
$k-$dimensional  spacelike submanifold and  $g|_S$ has a local maximum at $p \in Z_g^{(k)}$,
 then $S$ cannot satisfy the trapping condition at $p$.
 \item If $S$ is a $k-$dimensional CTM, then it is not possible that $S \subset 
Z_g^{(k)}$.
\end{enumerate}
 \end{theorem}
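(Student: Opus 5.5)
The plan is to compute the Hessian of $g|_S$ at the maximum point $p$ and express it through the ambient Hessian and the second fundamental form. Let $e^a_\a$, $\a=1,\dots,k$, be a local frame of $TS$ near $p$. For the function $g|_S$ we have $e_\a(e_\b(g)) = e^a_\a\nabla_a(e^b_\b \nabla_b g) = e^a_\a e^b_\b \nabla_a\nabla_b g + (e^a_\a\nabla_a e^b_\b)\nabla_b g$. At $p$, $g|_S$ is critical, so $T_pS\subset T_p\Sigma$ for the level set $\Sigma$ of $g$ through $p$, and $\nabla^b g$ is normal to $S$. Hence $(e^a_\a\nabla_a e^b_\b)\nabla_b g = (e^a_\a\nabla_a e^b_\b)^\perp\nabla_b g = -\sff^b_{\a\b}\nabla_b g$. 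Contracting with $h^{\a\b}$ gives
\begin{equation*}
\Delta_S (g|_S)(p) = h^{\a\b}e^a_\a e^b_\b \nabla_a\nabla_b g - H^b\nabla_b g ,
\end{equation*}
where $\Delta_S$ is the Laplacian of the induced metric. The Christoffel correction $\Gamma^\gamma_{\a\b}e_\gamma(g)$ vanishes at a critical point, so the coordinate Hessian and the covariant Hessian agree there.

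For i), suppose $g|_S$ has a local maximum at $p\in Z^{(k)}_g$. Then its Hessian at $p$ is negative semidefinite, so $\Delta_S(g|_S)(p)\le 0$. Since $T_pS$ is a spacelike $k$-dimensional subspace of $T_p\Sigma$, the definition of $Z^{(k)}_g$ gives $h^{\a\b}e^a_\a e^b_\b\nabla_a\nabla_b g\ge 0$. Therefore $H^b\nabla_b g \ge 0$. If $S$ were trapped at $p$, then $H^b$ would be future timelike and $\nabla^b g$ future causal and nonzero. Two future causal vectors, one of them timelike, have strictly negative inner product, so $H^b\nabla_b g<0$, which is a contradiction. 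This is where nonvanishing of $\nabla^a g$ is needed; I read "future causal" as implying it is nonzero.

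For ii), if $S$ is a compact $k$-dimensional CTM with $S\subset Z^{(k)}_g$, the continuous function $g|_S$ attains a global maximum at some $p\in S\subset Z_g^{(k)}$. Part i) then says $S$ is not trapped at $p$, contradicting that its MCVF is future timelike everywhere.

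The main subtlety I expect is justifying that the hypothesis on $\nabla_a\nabla_b g$ applies to $T_pS$. This needs $T_pS$ to lie in the tangent space of the level set, which is exactly what the critical point condition provides; it holds whether the level set is spacelike or null at $p$. In the null case $k\le n-1$ is automatic, because $T_pS$ is a spacelike subspace of a degenerate hyperplane. Finally, $g\in C^2$ suffices for the second-order maximum condition and the Hessian identity.
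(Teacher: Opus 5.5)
Your proposal is correct and follows essentially the same route as the paper's proof. Both use the identity $H^b\nabla_b g = h^{\alpha\beta}e^a_\alpha e^b_\beta\nabla_a\nabla_b g - \Delta_S g$ at the critical point, get $\Delta_S g\le 0$ from the local maximum, apply the $k$-convexity hypothesis to $T_pS\subset T_p\Sigma$, and for part ii) take a global maximum of $g|_S$ on the compact $S$; your derivation of that identity from the normal part of $e^a_\alpha\nabla_a e^b_\beta$ fills in the step the paper only cites from its reference.
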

 
 \begin{proof} (Sketch)
 Introduce local coordinates $u^\a$  [$x^a$] for  $S$ [$M$] around $p$ and the associated local basis 
 $e^a_\a= \p x^a/\p u^\a$ of $TS$. Note that $T_pS \subset T_p \Sigma^n$, $\Sigma^n$ the level 
 set of $g$ through $p$, since, for every $\a$, 
  $e^a_\a \nabla_a g = \partial g(x(u))/\partial u^\a =0$ at $p$, where 
 $g|_S$ reaches a local maximum. 
 The induced metric $h_{\a \b} = g_{ab} e^a_\a e^b_\b$ has inverse $h^{\a \b}$. 
 Let $D_\a$ [$\nabla_a$] be the covariant derivative of $S$ [$M$], $\Delta_S g =h^{\a \b} D_\a D_\b g$  
 the $S$ Laplacian of $g|_S$. It can be shown (see \cite{Dotti2}) that the MCVF of $S$ at $p$ satisfies 
 \begin{equation}\label{c2}
 H^b \nabla_b g=h^{\a \b}\,  e^a_\a e^b_\b \nabla_a \nabla_b g - \Delta_S g.
\end{equation}
Since $p$ is a local maximum of $g|_S$, any coordinate 
Hessian 
$\p_\a \p_\b g$ of $g|_S$ at this point is negative semi-definite and agrees with 
$D_\a D_\b g=\p_\a \p_\b g - \Gamma_{\hspace{-1mm}S}^\g{}_{\a \b} \p_\g g$, then 
\begin{equation}\label{c3}
\Delta_S g=
h^{\a \b} (\p_\a \p_\b g - \Gamma_{\hspace{-1mm}S}^\g{}_{\a \b} \p_\g g) \Big|_p
 =h^{\a \b} \p_\a \p_\b g \Big|_p \leq 0.
\end{equation}
According to the hypothesis, $e^a_\a e^b_\b \nabla_a \nabla_b g \geq 0$, which, added to 
\eqref{c3} and \eqref{c2} gives $H^b \nabla_b g \geq 0$ at $p$. Since $\nabla^a g$ is future causal 
at $p$, this inequality implies that the MCVF $H^a$ cannot be future timelike at this point. This proves 1. To prove 2, 
we just note that, if $S$ is compact, $g|_S$ must reach a global, then local maximum.
 \end{proof}

 Since $\nabla^a g$ in the theory is future causal, then not zero (in our conventions the zero vector is not null), 
 the $g$ level sets are embedded hypersurfaces with normal covector $\nabla_a g$. The hypothesis 
 on $\nabla_a \nabla_b g$ can then be restated as the $g-$level sets being $k-$future convex (Definition 
 \ref{dfn2}). Finally, from the analysis in the ending paragraph of the previous section we recognize that 
 the local maximum condition translates into $S$ touching tangentially a level set hypersurface 
 from its future side. Since any embedded hypersurface can be locally regarded as a level set of a function 
 $g: M^{n+1} \to \mathbb{R}$, we can rephrase Theorem \ref{v1} as follows:

 \begin{manualthm}[4'] \label{v2} 
 Let $(M^{n+1},g_{ab})$ be  a spacetime.
 \begin{enumerate}[i)]
\item  If $\Sigma^n$ is a  $k-$future convex spacelike/null hypersurface and $S$  a spacelike 
$k-$dimensional submanifold tangent to  $\Sigma^n$  at $p$  from its future side, 
then  $S$ cannot satisfy the trapping condition at  $p$.
 \item If $Z^{(k)}$ is an open subset of $M$ 
 foliated by $k-$future convex spacelike/null hypersurfaces and  $S$ is a $k-$dimensional  CTM, then it is not possible that $S \subset 
Z^{(k)}$.
\end{enumerate}
\end{manualthm}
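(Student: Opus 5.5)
The plan is to reduce Theorem $4'$ to Theorem~\ref{v1} by realizing the hypersurfaces involved as level sets of a suitable function $g$. For part i), $\Sigma^n$ is an embedded spacelike/null hypersurface, so near $p$ it is the zero level set of a $C^2$ function $g$ defined on an open neighborhood $O$ of $p$. Because the normal of $\Sigma^n$ is causal and nonzero, we may take $\nabla^a g$ nonvanishing, and after a sign change we may assume $\nabla^a g$ is future causal on $\Sigma^n$.

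The first step is the convexity translation. The normal $N_b$ in Definition~\ref{dfn2} can be taken to be $\nabla_b g$, and for $u,v\in T_p\Sigma^n$ we have $u^a v^b \nabla_a N_b = u^a v^b \nabla_a\nabla_b g$. So $k$-future convexity of $\Sigma^n$ at $p$ is exactly the hypothesis of Theorem~\ref{v1} on the trace of $\nabla_a\nabla_b g$ over spacelike $k$-planes tangent to the level set. If we rescale $N=f\nabla g$ with $f>0$, the extra term $(\nabla_a f)\nabla_b g$ vanishes on tangent vectors, so the choice of normalization does not matter.

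The second step is the tangency translation. By the definition at the end of the previous section, ``$S$ tangent to $\Sigma^n$ at $p$ from its future side'' means $g|_S$ has a local maximum at $p$. Theorem~\ref{v1} requires the convexity hypothesis on an open set $Z_g^{(k)}$, but its proof (equations \eqref{c2}, \eqref{c3}) uses the hypotheses only at the point $p$. So I would either note this pointwise character of the proof, or rerun the three-line argument at $p$: $H^b\nabla_b g=\mathrm{tr}_{T_pS}\nabla\nabla g-\Delta_S g\ge 0$. Since $\nabla^b g$ is future causal, $H$ cannot be future timelike, and part i) follows.

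For part ii), suppose a $k$-dimensional CTM $S$ lies in $Z^{(k)}$. I would not try to build a single global function $g$, since the foliation need not come from a global time function with the right sign everywhere. Instead, since $S$ is compact, I would use a maximality argument on the leaves. Locally, the foliation is given by level sets of functions with future causal gradient, and these determine a transverse ``future'' direction.

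This is the main obstacle: producing a leaf touched by $S$ from its future side. The cleanest route I would try first is to assume the foliation is given by the level sets of a $C^2$ function $g$ on $Z^{(k)}$ with future causal gradient, which is the intended reading and exactly the setup of Theorem~\ref{v1}ii). Then $g|_S$ attains a global maximum at some $p\in S$, the leaf through $p$ is touched from its future side, and part i) contradicts trapping at $p$.

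If only a foliation is given, I would first pass to a local defining function near each point. Then I would argue, using compactness and the leaf space being (locally) one-dimensional, that an ``extremal future leaf'' meeting $S$ exists, with $S$ lying locally to its past. I would flag this as the step needing care, possibly requiring a mild global assumption such as a Hausdorff leaf space.
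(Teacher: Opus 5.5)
Your main argument is correct and is the paper's own: Theorem $4'$ follows from Theorem~\ref{v1} by writing the leaves as level sets of a function $g$ with future causal gradient, so that $k$-future convexity becomes the Hessian-trace hypothesis and tangency from the future side becomes a local maximum of $g|_S$. Your remark that the proof of Theorem~\ref{v1} uses its hypotheses only at $p$ is a useful precision the paper leaves implicit, and in ii) the paper likewise tacitly reads ``foliated'' as ``level sets of one function on $Z^{(k)}$''.

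The only slip is in your optional fallback for general foliations, where the orientation is reversed. The leaf to use is the one where $g|_S$ is maximal, i.e.\ the most \emph{past} leaf meeting $S$, with $S$ locally to its \emph{future}. At the most future leaf, with $S$ locally to its past, $g|_S$ has a local minimum, so $\Delta_S g\ge 0$ and no contradiction follows.
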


Part 1 of Theorem was used in \cite{Dotti2} to explain why there are no $k-$CTMs in certain spacetime 
regions (e.g., extremal black holes); part 2 to establish past spacelike barriers for $k-$CTMs, in particular, to 
show that the barrier found in \cite{Bengtsson:2010tj} 
for CTSs in $3+1$ spherical collapse spacetimes also works for TLs. In the following section,  part 1 
will be used with $\Sigma^n$ the event horizons of black holes.

\section{$k-$CTMs and the domain of outer communications}

In an asymptotically simple $3+1$ spacetime satisfying the NEC, a CTS $S$ 
cannot intersect the domain of outer communications 
$\mathcal{D}$, which is defined as the complement  $J^-(\mathscr{I^+})$ of the black hole region $\mathcal{B}$. 
The proof, given, e.g., as  Proposition 12.2.2 in \cite{wald}, uses the fact that, if $S \cap \mathcal{D} \not = 
\emptyset$ there is a future null geodesic orthogonal to $S$ and reaching $\mathcal{I}^+$, 
and that this geodesic has 
 no 
$S-$conjugate points. This contradicts the fact that future null geodesics orthogonal to $S$ 
  do have  $S-$conjugate points when $S$ is trapped (Proposition 9.3.9 in \cite{wald})

\subsection{Symmetrically collapsing spacetimes} \label{sc}

Consider a general, dynamic spacetime 
\begin{equation}\label{gcs}
ds^2 = -e^{2\beta(v,r)} f(v,r) dv^2 + 2e^{\beta(v,r)} dv dr + r^2 \gamma_{AB} d\theta^A d\theta^B,
\end{equation}
where  $\gamma_{AB}(\theta)d\theta^A d\theta^B$ is the metric of the unit $n-$sphere. Under suitable 
conditions  on $f$ and $\beta$ \eqref{gcs}  models 
spherically symmetric collapse with black hole formation  (see section VI in \cite{Bengtsson:2010tj}). If neither $\beta$ nor $f$ depend 
on $v$, \eqref{gcs} models a static spacetime in those regions where $f>0$. \\
 We choose the time orientation for which 
 the null vector field $-\p_r$ is future. We remind the reader 
that 
that apparent horizon (AH)  is the outermost connected component of the hypersurface  defined by 
$f(v,r)=0$ \cite{Bengtsson:2010tj}, and may have spacelike as well as timelike or null portions. The  event horizon (EH)  is a null, spherically symmetric  hypersurface 
located outside the AH, where $f>0$, and it is generated by outgoing radial null geodesics. These curves satisfy the equation 
\begin{equation}\label{ossng}
-e^{2\b}f \dot v+ 2 e^\b  \dot r =0,
\end{equation}
where a dot means derivative with respect to an affine parameter. Note that the EH is 
one of these null hypersurfaces. 
 The EH  and AH may coincide at some points or approach asymptotically towards the future. 
%as we approach a point in AH $\cap$ EH from the DOC, $f \to 0$ from the right.  
 Consider the spherically symmetric 
solutions of the eikonal equation 
\begin{equation} \label{eiko}
(\partial_r g) \left[2e^{-\beta} (\partial_v g) + f (\partial_r g)\right] = 0.
 \end{equation}
We are not interested in the  family  $g=g(v)$, whose level sets are null hypersurfaces generated by the ingoing   null vector field $-\p_r$,  
but in the solutions that make zero 
the term within square brackets, whose level sets are generated by the outgoing radial null geodesics \eqref{ossng}. We will assume that there is a regular solution of this equation in the DOC.
By regular we mean that $dg \neq 0$, so that its level sets are embedded null hypersurfaces. 
This forces $\p_r g \neq 0$ (otherwise $dg=0$, see \eqref{eiko}), so $\p_r g$ cannot change sign within the DOC. We lock 
our choice of sign for $g$ by adding the condition $\p_r g>0$, so that the null vector field $\nabla^a g$ is future. 
Some care is required if $\text{AH} \cap \text{EH} =Z \neq \emptyset$, 
as happens in the static case: suppose 
$(v_o,r_o,\theta, \phi)$ are the coordinates of a point in $Z$, then $\lim_{r \to r_o^+}f(r,v_o) = 0$ and 
so  $\lim_{r \searrow r_o} \p_r g =  \lim_{r \searrow r_o^+} [-2 e^\b (\p_v g)/f]$ is either a positive constant $c$ or $\infty$. 
In the first case the horizon is the level set $g=c$. In the second case, take 
 $r_1 >r_o$, then
 \begin{equation}
 g(v_o,r)=g(v_o,r_1)-\int_r^{r_1} \p_rg(v_o,r') \, dr' \to -\infty \; \text{ as } \;  r\searrow r_o,
 \end{equation}
 and $G=e^g$ is a new solution to the ``outgoing'' eikonal equation with the EH the level set $G=0$.
 In any case, a solution $g$ of the eikonal equation  in the DOC can be found for which  the EH, the past boundary of the black hole, is the 
 lowest level 
 set, as $g$  grows 
 in the DOC. In view of Theorem \ref{v1}.1, $k-$CTMs could not trespass the EH into the DOC if this foliation is
 $k-$future convex, as a local maximum of $g$ would be reached in the DOC. To check convexity at a point $p$ of 
 a $g$ level set $\Sigma$, 
 note that the $\p_{\theta^A}$ span a section $V_o$ of $T_p \Sigma$  and the restriction of 
 the covariant Hessian $\nabla_a \nabla_B g$ to $V_o \otimes V_o$ of a function 
 $g$ satisfying the outer eikonal equation simplifies to
 \begin{equation}
 \nabla_a \nabla_b g|_{(a=A,b=B)} = \frac{1}{2} r f \partial_r g \, \gamma_{AB},
 \end{equation}
which is  is positive definite since $r, f $ and $\p_r g$ are all  positive in the DOC.
We  conclude that the foliation by $g-$level sets is $1-$future convex and no $k-$CTMs can intersect the DOC. 
Note that the analysis of this example is based on the facts that  the spacetime is a warped product 
of a Lorentzian 2-manifold (the one with local coordinates $(v,r)$) with the unit $n-1$ sphere 
 $({\rm S}^{n-1},\gamma_{AB})$, which is Riemannian and on which 
 the isometry subgroup $SO(n)$ acts transitively. We could generalize it by replacing 
 $({\rm S}^{n-1},\gamma_{AB})$ by any homogeneous compact Riemannian manifold $(K, \gamma)$, that is,  
one on which  its isometry group $H$ (which is a subgroup  of the spacetime isometries) acts transitively 
The event horizon would then be $H-$invariant, and a level set  of an outgoing radial 
solution of the eikonal equation. This spacetime models an $H-$symmetric collapse, 
$H=SO(n)$ being a particular case. \\

\noindent
\textbf{Example:} Consider Vaidya spacetime,  given by  \eqref{gcs} with $\gamma$ the metric of the unit 2-sphere, $\beta=0$ and 
$f(v,r)=1-2m(v)/r$. We assume  $0 \leq m(v) \leq M$, $dm/dv \geq 0$, $m(v) \to 0$ as $v \to -\infty$ 
and $m(v) \to M$ as $v \to \infty$. The EH is  the null, spherically symmetric  hypersurface $v=V_E(r)$ 
defined by the equation $dV_E/dr=2/f(V_E(r),r)$ together with $\lim_{v \to \infty} V_E(r)=2M$. The AH 
is the  spherically symmetric hypersurface $r=2m(v)$; it is \textit{spacelike} where $dm/dv>0$, null in those sectors where $dm/dv=0$.
 The projection onto the $(r,v)$ 
plane of the $1-$future convex DOC slicing by $g$ level sets ($g$ defined below equation \eqref{eiko}) gives the set of outgoing radial null geodesics. 
The DOC slicing for the case 
\begin{equation} \label{mf}
m(v) = \tfrac{M}{2} (1 + \tanh(v/v_o)
\end{equation}
 is shown in Figure \ref{f} below. 

\begin{figure}[htbp]
    \centering
    \includegraphics[width=0.8\textwidth,trim=0cm 12cm 0cm 0cm, clip]{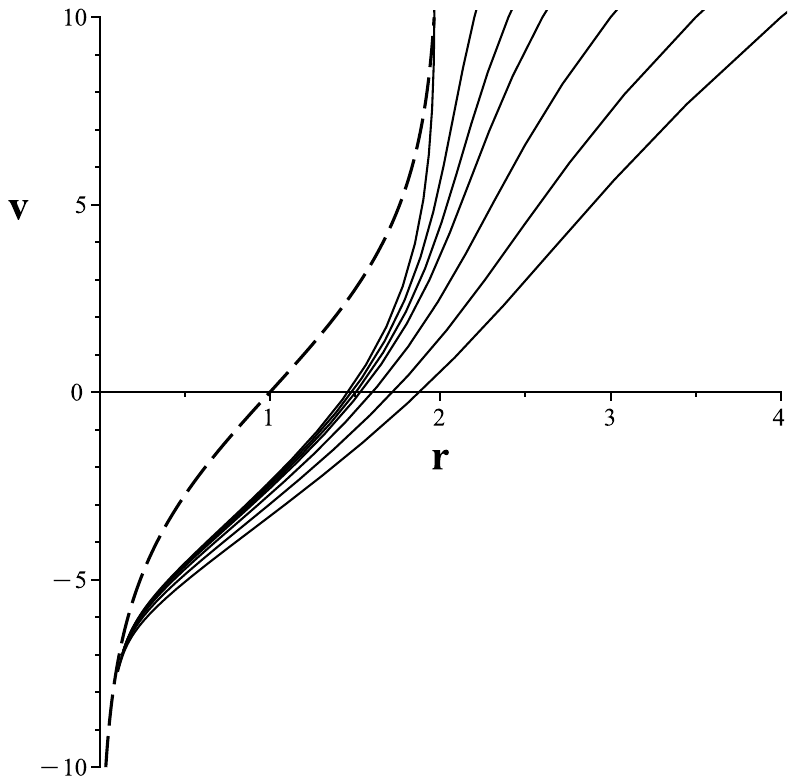}
    \caption{Seven leafs of the DOC slicing for the Vaidya spacetime with mass function \eqref{mf}  ($v_o=5$ and $M=1$)  
    are shown together with  the AH (dashed). Note the 
    gap between the AH and the EH.  Non spherically symmetric CTSs and TLs enter this gap up to a past barrier \cite{Dotti2,example,x,Bengtsson:2010tj}}
    \label{f}
\end{figure}

\subsection{Sub-extreme Kerr-Newman spacetime}

In advanced coordinates $(v,r,\theta,\phi)$, the metric of the Kerr-Newman  spacetime is 
\begin{multline}
ds^2 =  -\left(1- \frac{2Mr-e^2}{\rho^2} \right) dv^2+\rho^2 d\theta^2 + \left[ \frac{(r^2+a^2)^2- \Delta a^2 \sin^2\theta}{\rho^2} \right] \sin^2\theta  d\varphi^2\\
-\frac{2a(2Mr-e^2) \sin^2 \theta}{\rho^2} dv \, d\varphi + 2 \, dv \,dr -2 a \sin^2 \theta \ dr \, d\varphi. 
\end{multline}
Here $\Delta=r^2-2Mr+a^2+e^2$, $ \rho^2 = r^2+a^2 \cos^2 \theta$ and  $(\theta,\varphi)$ are the standard coordinates of $\rm{S}^2$. 
We may assume $a>0$ (otherwise switch $\varphi \to \varphi'=-\varphi$). 
The Cauchy ($r_-$) and event horizons $(r_+$) are the roots of $\Delta$, which are real by 
definition of ``sub-extreme'' and 
are located at $r_\pm = m \pm \sqrt{m^2-a^2-e^2}$; this can be inverted to
\begin{equation}\label{ax}
a=\sqrt{r_- r_+ -e^2}, \;\;\;\; m=\tfrac{1}{2} (r_-+ r_-).
\end{equation}
The eikonal equation $g^{ab} \nabla_a g \nabla_b g =0$ admits a family of solutions \cite{Carter:1968rr} 
\begin{equation} \label{eiko1}
g(v,r,\theta) = -v+ \int q_{s_1}(r) \; dr + s_2 \; a \sin \theta 
\end{equation}
where $s_1=\pm1$, $s_2=\pm 1$ are independent signs and 
\begin{equation} \label{eiko2}
q_{s_1}(r)= \frac{a^2+r^2}{\Delta}  + s_1  \frac{\sqrt{(r^2+2  m r -e^2)\, a^2+ r^4}}{\Delta},
\end{equation}
the content of the square root being  positive for all $r$ 
in the sub-extreme case. 
Our goal is to find a null foliation of the DOC (which is the set $r>r_+$) with the EH $r=r_+$ 
a level set. An inspection of \eqref{eiko1} may suggest that  this is impossible within this family of solutions due 
to its non-trivial dependence on $\theta$. There is, however, a loophole in this argument, as we now explain.  
If  $s_1=-1$ then  $q$ and $g$ are smooth across horizons; for $s_1=1$, instead, we find that 
$\lim_{r \searrow r_+} q_{(s_1=1)}(r) = \infty$, then,  if 
\begin{equation} \label{ssee}
g_o(v,r,\theta) = -v+ \int_{r^*}^r  q_{(s_1=1)}(r') \; dr' + s_2 \; a \sin \theta
\end{equation}
for some chosen  $r^*>r_+$, $ \lim_{r \searrow r_+} g_o(v,r,\theta) = -\infty$ and 
the solution 
$G= e^{g_o}$ of the eikonal equation has $\mathcal{H}$ as the level set $G=0$ with $G>0$ in the DOC, so this function  is perfectly suited   
 to Theorem \ref{v1}.1: any TL or CTS intersecting the DOC will necessarily reach a local maximum of $G$, 
which has $\nabla^a G$ null and future (as $-\p_r$ is future null and $\p_r G = e^{g_o} q_{(s_1=1)}(r)$, which is 
positive in the DOC). To make use of Theorem \ref{v1} we need to analyze the convexity of the $G$ level sets. 
To this end we find a section $V_o=\text{span}\{ e_1^a, e_2^a\}$ of  orthonormal vector fields orthogonal to $\nabla_a G$ 
and then analyze the positiveness of the symmetric matrix $G_{ij}=e_i^a e_j^b \nabla_a \nabla_b G$ 
(since $e_i^a e_j^b \nabla_a \nabla_b e^g = e^g e_i^a e_j^b \nabla_a \nabla_b g$ we analyze the positiveness 
of  $e_i^a e_j^b \nabla_a \nabla_b g$ instead). 
Starting with the non-orthonormal basis 
$\tilde e_1 = \p_{\varphi}$ and $\tilde e_2=a \cos(\theta) \p_v + \p_\theta$ we apply Gram-Schimdt and get an orthonormal basis 
with $e_1 \propto \tilde e_1$. We then analyze the trace and determinant of $G_{ij}$. The expressions are quite bulky. They 
were analyzed  numerically in the range $0<r_-<r_+$ , $e^2<r_-r_+$ after replacing \eqref{ax}; both the trace and determinant are 
positive in the DOC; an example is given in Figure \ref{f2}.
\begin{figure}[htbp]
    \centering
    \includegraphics[width=0.8\textwidth,trim=0cm 8cm 0cm 3cm, clip]{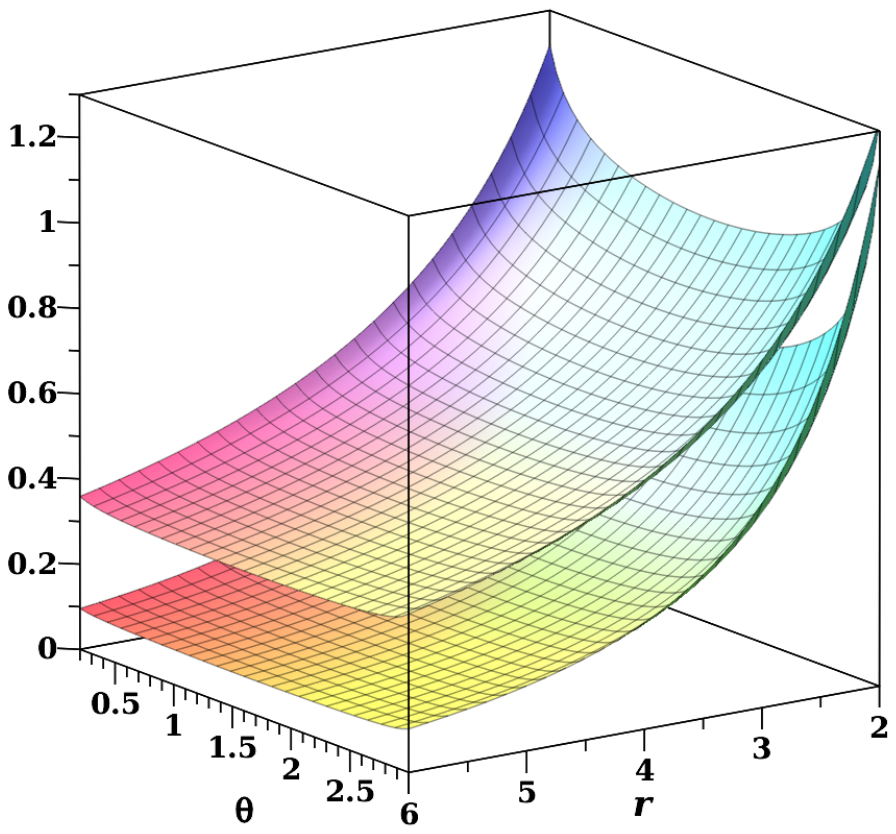}
    \caption{The trace (upper sheet) and (3x) the determinant of the matrix $e^a_i e^b_j (\nabla_a \nabla_b g)$ 
    for the solution \eqref{ssee} with $s_2=1$ of the Kerr-Newman eikonal equation. The plot corresponds to $r_-=1$, $r_+=2$ 
    and $e=1$. }
    \label{f2}
\end{figure}
The positiveness of both the trace and determinant implies that the $G$ foliation is $k-$future convex for 
$k=1,2$, then no TL or CTS can reach a local maximum of $G$; in particular, they can not be included 
and not even intersect the DOC. This statement is quite strong, as we now proceed to explain: 
 figure \ref{f3} shows schematically the EH  and the ergosphere 
 of the Kerr-Newmann spacetime in the non-trivial case $a \neq 0$ (note that the non rotating 
 case $a=0$ was already covered in section \ref{sc}). Five possibilities for a CTM are 
 considered, placed as in $A,B,C,D$ and $E$. A CTM like $A$ is possible: the exotic spheres 
 $(v=v_o, r=r_o)$ with $r_-<r_o<r_+$ are trapped, and their equators are TLs. $B$ could 
 not be a CTM, as there are no CTMs in stationary region. This is easily proved 
 using equation \eqref{Vdot} with $\zeta^a$ the timelike Killing field, for a CTM 
 $H^b$ is future timelike, then the integrand negative, whereas the volume 
 is invariant under the flow since the flow is an isometry, so we get a contradiction. 
 The non-trivial cases C, D and E are now ruled out due to the existence of 
 a $1-$future convex foliation of the DOC by $g$ level sets with the EH the minimum of $g$. 
 Any loop or compact surface like these would reach a local maximum of 
 $g$ within the DOC, and could not satisfy the trapping condition at this point. 
\begin{figure}[htbp]
    \centering
    \includegraphics[width=0.8\textwidth,trim=0cm 2cm 0cm 0cm, clip]{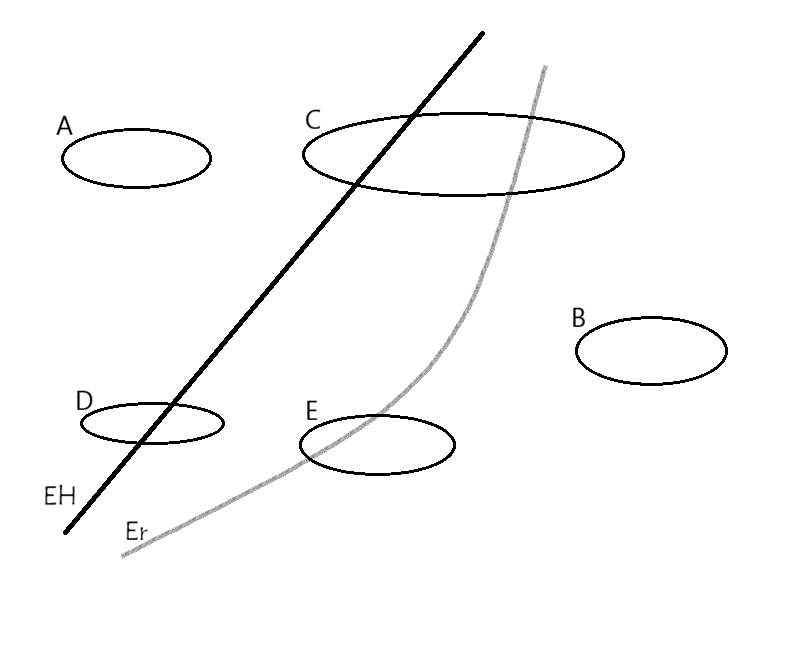}
    \caption{A schematic representation of a rotating sub-extreme black hole.  The figure 
    shows the EH (thick black straight line), the ergosphere (Er, thin curve) and 4 possibilities 
    of placing for  a CTM: A, B, C, D. Possibility A is allowed whereas B is not possible 
    because it lies  entirely within a stationary region. The cases C, D, E are ruled out by 
    the $1-$future convex foliation constructed in the text.
     }
    \label{f3}
\end{figure}
 
\section{Conclusions}
The canonical definition of the black hole region as $\mathcal{B}=M-J^-(\mathscr{I}^+)$, being  inherently global,   
is of limited conceptual and practical use. Both observationally and numerically we can only handle 
  a limited portion of the spacetime. Given that  Penrose's theorem (combined with the  weak cosmic censorship conjecture) implies 
   that closed trapped surfaces  (CTSs) can only exist  
  within  $\mathcal{B}$, the paradigm has shifted to 
  \textit{defining} $\mathcal{B}$ as the region of the spacetime containing CTSs. This region is limited 
  by stable marginally outer trapped surfaces which serve  as  (slicing dependent) 
  proxies of the event horizon  \cite{A1,A2,A3,Booth1,Booth3,Eardley,Pook-Kolb2,Booth2,newman}. \\
  Consider now higher odimension compact trapped submanifolds CTMs.  (In $3+1$ gravity, this brings trapped loops  into consideration.) 
   Although the hypothesis of the generalized singularity theorem by Galloway and Senovilla (reproduced here as 
   Theorem  \ref{thm1} in section \ref{sec1}) 
   cannot be satisfied by generic spacetimes 
 \cite{dd}, the theorem still works if we restrict ourselves to ``properly oriented'' $k-$CTMs \cite{dd}: 
 some of the null geodesics orthogonal to these CTMs are future incomplete and, again, 
 invoking weak cosmic censorship, we  conclude that higher (than two) codimension CTMs should also be regarded 
 as black hole phenomenology \cite{Dotti2,dd}. 
 Consequently, we posit the conjecture that no CTM can intersect the DOC $J^-(\mathscr{I}^+)$ of a canonical 
 black hole spacetime. 
    Tools to rule out the existence 
of CTMs  are given in \cite{Dotti}; then generalized in \cite{Dotti2} 
using the framework  of $k-$future convexity of spacelike/null hypersurfaces. It is 
proved that $k-$CTMS cannot be included (or intersect from its future side) spacetimes regions foliated by such hypersurfaces. 
In this work we have tested 
 the conjecture and demonstrated that it holds for every $k$ in the case of the symmetrically collapsing spacetimes in 
  \eqref{gcs}, which include spherical 
 collapse in arbitrary dimensions, as well as across  the entire sub-extreme Kerr-Newman family. The latter, contains non trivial  situations where the candidate 
 CTM could partially intersect the black hole and ergosphere regions.

\bmhead{Acknowledgements} Part of this work was presented at the  
\textit{Extremal Black Holes and the Third Law of Black Hole Thermodynamics ICERM Topical Workshop.} 
I thank the  ICERM for financial support.

\end{document}